\documentclass[letterpaper, 10 pt, conference]{ieeeconf}

\IEEEoverridecommandlockouts

\usepackage{graphicx} 
\usepackage{cite}

\usepackage{amsmath,amssymb,amsfonts}

\usepackage{amsthm}
\usepackage{tabularx}
\usepackage{graphicx}
\usepackage{textcomp}
\usepackage{xcolor}
\usepackage{url}

\usepackage{caption}
\usepackage{subfigure}
\usepackage{dsfont}
\usepackage{mathtools}
\usepackage{threeparttable}
\usepackage{colortbl}
\usepackage{siunitx}
\usepackage{multirow}

\allowdisplaybreaks

\DeclareSymbolFont{bbold}{U}{bbold}{m}{n}
\DeclareSymbolFontAlphabet{\mathbbold}{bbold}

\newcommand{\diag}[1]{\ensuremath{\mathrm{diag}(#1)}}
\newcommand{\tr}[1]{\ensuremath{\mathrm{tr}(#1)}}
\newcommand{\rank}[1]{\ensuremath{\mathrm{rank}(#1)}}

\theoremstyle{bfnote}
\newtheorem{thm}{Theorem}

\newtheorem{rem}{Remark}

\DeclareSIUnit[]{\pu}{p.u.}
\DeclareSIUnit[]{\VA}{VA}
\newcommand{\real}[0]{\mathbb R}

\usepackage{setspace}
\title{\LARGE \bf
Optimal Allocation of Grid-Forming Frequency Shaping Control
}

\author{Zhaomin Lyu$^{1}$, Ding Zhang$^{2}$, and Yan Jiang$^{1}$
\thanks{This work was supported in part by Guangdong Association for Science \& Technology Young Elite Scientist Cultivation Program SKXRC2026403 and CUHKSZ University Development Fund 01003790.
\textit{(Corresponding author: Yan Jiang.)}}
\thanks{$^{1}$Z. Lyu and Y. Jiang are with the School of Science and Engineering, The Chinese University of Hong Kong, Shenzhen, 518172, CHN. Emails: {\tt \{zhaominlyu,yjiang\}@cuhk.edu.cn}}%
\thanks{$^{2}$D. Zhang is with the School of Engineering, Australian National University, Canberra, AUS. Email: {\tt ding.zhang@connect.ust.hk}}
}

\begin{document}

\maketitle

\thispagestyle{empty}
\pagestyle{empty}

\begin{abstract}
Various inverter-based control strategies have been proposed to improve frequency security for power systems with high renewable penetration. Among them, the grid-forming frequency shaping control is particularly promising due to its ability to shape the post-contingency aggregate system frequency dynamics into first-order with prescribed rate of change of frequency (RoCoF) and steady-state frequency deviation. Moreover, the shaped aggregate dynamics depends on the harmonic sum of all inverter transfer functions rather than on their distribution. In light of this, here we explore how to allocate shaping control resources to minimize the transient control effort needed to achieve ideal coherent dynamics. We formulate it as a constrained optimization problem by tackling two main difficulties. First, we make the squared $\mathcal{H}_2$ norm legitimate quantification of transient control cost under step power imbalances through a system transformation. Second, we simplify the $s$-domain constraint for Nadir elimination into a standard constraint that is easy to implement in optimization. The resulting non-convex optimization problem can be solved by existing successive convexification method. The effectiveness
of the allocation has been verified on the modified Icelandic Power Network test case.

\end{abstract}

\section{Introduction}\label{sec:intro}

The urgent need for decarbonization is driving the generation mix toward cleaner energy resources such as solar and wind, which  are typically interfaced to power grids through power electronic inverters~\cite{benjamin2017}. The resulting loss of physical inertia and damping inherently provided by synchronous generators jeopardizes the grid frequency security~\cite{milano2018}. To address this issue, various frequency control strategies have been proposed for inverter-based resources (IBRs), including but not limited to droop control~\cite{ofir2018DCvsvVI}, virtual inertia~\cite{fang2017distributed,Poolla2017TAC}, and frequency shaping~\cite{jiangtps2021,jiang2021lcss,Jiang2026TPS}. The former two methods are widely adopted in practice, both of which leverage flexible inverter power output to mimic synchronous generator behavior, thereby enhancing frequency security. However, simply sticking to synthetic inertia and droop response cannot take full advantage of IBRs~\cite{jiang2021tac}. An alternative is our recently proposed frequency shaping control which, as the name suggests, is able to shape the aggregate system frequency dynamics following a
sudden power imbalance into a first-order one that naturally
has no Nadir, while simultaneously attaining the pre-specified rate of change of frequency (RoCoF) and steady-state frequency deviation~\cite{jiangtps2021,jiang2021lcss,Jiang2026TPS}. Like other prevalent frequency control strategies, frequency shaping can also be implemented in either grid-following~\cite{jiangtps2021,Jiang2026TPS} or grid-forming~\cite{jiang2021lcss} mode, where the latter responds to power imbalances by directly adjusting frequency, thus avoiding delay introduced by the phase-locked loop and low-pass filter~\cite{jiang2021lcss}. Hence, here, we exclusively focus on the \emph{grid-forming frequency shaping control}, a promising approach for future power systems.

When deploying the grid-forming frequency shaping control, the aggregate coherent dynamics depends purely on the harmonic sum of all inverter transfer functions, irrespective of their distribution across the network~\cite{jiang2021lcss}. This leaves room for exploring the optimal allocation of inverter control resources over the network to minimize control effort required for desired aggregate frequency dynamics, which has not been investigated by~\cite{jiang2021lcss}. In fact, the optimal placement of virtual inertia and droop control based on system norms that characterize frequency control performance metrics has been extensively studied~\cite{Poolla2017TAC, Poolla2019tps}. For example, \cite{Poolla2019tps} proposed to allocate virtual inertia and droop coefficients of inverters to minimize a weighted quadratic cost penalizing frequency excursions and control effort, where the squared $\mathcal{H}_2$ norm is used to quantify the cost under a series of impulse power disturbances. However, contingencies in power systems are typically deemed as step disturbances. Thus, the precision of using the $\mathcal{H}_2$ norm defined in \cite{Poolla2019tps} to measure the cost during the whole transient is questionable. In addition, compared to common virtual inertia and droop control, the optimal allocation of grid-forming frequency shaping control resources must obey an $s$-domain constraint for Nadir elimination~\cite{jiang2021lcss}, which is challenging to enforce directly. Clearly, care must be taken to properly formulate the allocation of grid-forming frequency shaping control resources as a constrained optimization problem.

With this in mind, we address the transient cost quantification and the $s$-domain constraint simplification separately. First, we transform the original system with power disturbances as natural inputs into a system with the time derivative of power disturbances as inputs, which makes the squared $\mathcal{H}_2$ norm of the resulting system a reasonable estimate of the control effort during the transient induced by step power imbalances in the original system. Second, we simplify the $s$-domain constraint into a standard equality constraint that can be readily incorporated into the optimization problem. This enables us to reformulate the grid-forming frequency shaping control allocation problem as a constrained non-convex optimization problem which can be solved by the existing successive convexification algorithm~\cite{Oguri2023cdc}. The effectiveness
of the allocation has been verified on the modified Icelandic Power Network test case.

\section{Modeling Approach and Problem Statement}\label{sec:model}
In this section, we present the power network model under grid-forming frequency shaping control and propose the optimal control resource allocation problem to be solved.
\subsection{Power System Model}\label{sec: 2A}
We consider a power network composed of $n$ buses indexed by $i \in \mathcal{N} := \{1,\dots, n\}$ and transmission lines denoted by unordered pairs $\{i,j\} \in \mathcal{E}\subset \{\{i,j\}:i,j\in\mathcal{N},i\not=j\}$. As illustrated in Fig. \ref{fig:model}, the system dynamics can be modeled as a feedback interconnection of bus dynamics and network dynamics~\cite{pm2019tac,jiang2021tac,jiang2021lcss}.
The input signals $\boldsymbol{p}_\mathrm{in} := \left(p_{\mathrm{in},i}, i \in \mathcal{N} \right) \in \real^n$ (in $\SI{}{\pu}$) represent power injection changes and the output signals $\boldsymbol{\omega}:=\left(\omega_i, i \in \mathcal{N} \right) \in \real^n$ (in $\SI{}{\pu}$) represent the bus frequency deviations from its nominal value. We now elaborate on the dynamic elements.

\begin{figure}
\centering
\includegraphics[width=0.65\columnwidth]{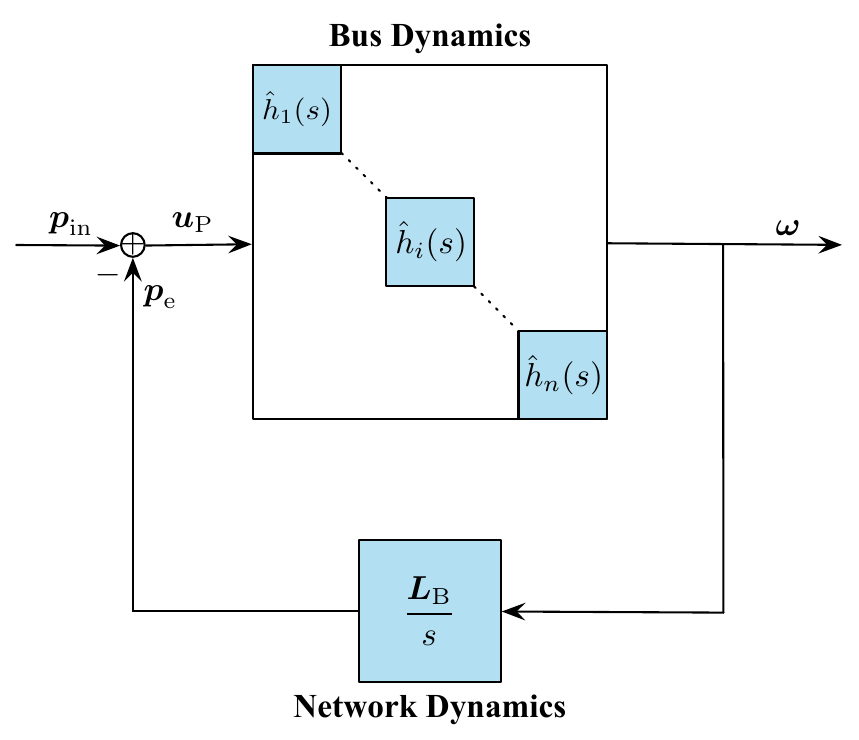}
\caption{Block diagram of power network.}\label{fig:model}
\end{figure}

\subsubsection{Bus Dynamics} We focus on the case where each bus is equipped with either a synchronous generator or a grid-forming inverter. Without loss of generality,
the set of buses $\mathcal{N}$ can be reordered and partitioned as a disjoint union of two subsets: the generator set $\mathcal{G}:=\{1,\ldots,|\mathcal{G}|\}$ and the inverter set $\mathcal{I}:=\{|\mathcal{G}|\!+\!1,\ldots,n\}$, where $|\mathcal{G}|$ denotes the cardinality of $\mathcal{G}$. Depending on whether $i \in \mathcal{G}$ or $i \in \mathcal{I}$, each bus $i$ maps the net power bus imbalance $u_{\mathrm{P},i}$ (in $\SI{}{\pu}$) to the frequency deviation $\omega_i$ according to a transfer function $\hat{h}_i(s)$ described below.\footnote{We use hat to denote the Laplace transform.}  \paragraph{Generator Dynamics}
We consider generator dynamics that are characterized by the standard swing dynamics governed by turbine droop through its mechanical power output variation $q_{\mathrm{t},i}$, i.e.,
\begin{subequations}\label{eq:gen-ode}
  \begin{align}
m_i \dot{\omega}_i =& - d_i \omega_i +q_{\mathrm{t},i} + u_{\mathrm{P},i} \,,\\
\tau_i\dot q_{\mathrm{t},i}=& -{\alpha_{\mathrm{t},i}}\omega_i - q_{\mathrm{t},i}\,,
\end{align}  
\end{subequations}
whose transfer function is 
\begin{align} \label{eq:dy-sw-t}
\hat{h}_i(s) =& \left(m_i s + d_i + \frac{\alpha_{\mathrm{t},i}}{\tau_i s + 1}\right)^{-1}\,,\qquad \forall i \in \mathcal{G}\,.
\end{align}
Here, $m_i>0$ (in $\SI{}{\second}$) denotes the aggregate generator inertia, $d_i>0$ (in $\SI{}{\pu}$) the aggregate generator damping, $\tau_i>0$ (in $\SI{}{\second}$) the turbine time constant, and $\alpha_{\mathrm{t},i}>0$ (in $\SI{}{\pu}$) the turbine inverse droop coefficient.
\paragraph{Inverter Dynamics}

We consider grid-forming inverters for their great potential to address weak grid conditions in future power systems. They are deemed the cornerstone of future grids due to their ability to set local grid frequency deviation $\omega_i$ directly as a function of their power output variation $(-u_{\mathrm{P},i})$. The detailed function depends on the control law $\hat{h}_i(s)$ adopted to map $u_{\mathrm{P},i}$ to $\omega_i$ for buses with $i \in \mathcal{I}$. The prevalent ones are grid-forming droop control and grid-forming virtual inertia, both of which attempt to mimic synchronous generator behavior through inverters. In this manuscript, however, we focus on the grid-forming frequency shaping control~\cite{jiang2021lcss} which is a promising approach in that it is able to shape the aggregate system frequency dynamics into a first-order one with the desired steady-state frequency deviation and RoCoF~\cite[Theorem 1]{jiang2021lcss}. For discussions on the superior performance of frequency shaping control over common approaches, we refer to~\cite{jiang2021lcss}. Specifically, a simple way to achieve grid-forming frequency shaping control is to set~\cite{jiang2021lcss}
\begin{equation}\label{eq:hi-inverter}
\hat{h}_i(s) = \left(m_{i} s + d_{i}-\frac{\rho_i}{\sigma_i s + 1}\right)^{-1}\,,\qquad\forall i \in \mathcal{I}\,,
\end{equation}
where the tunable inverter parameters should satisfy $m_{i}>0$ (in $\SI{}{\second}$), $d_{i} > \rho_i\geq 0$ (in $\SI{}{\pu}$), $\sigma_i>0$ (in $\SI{}{\second}$), and
\begin{align}\label{eq:thm1-condition}
\sum_{i \in \mathcal{I}} \frac{\rho_i}{\sigma_i s + 1} =& \sum_{i \in \mathcal{G}}\frac{\alpha_{\mathrm{t},i}}{\tau_i s + 1}\,.
\end{align}
Moreover, the sums of $m_{i}$ and $d_{i}$ over $i\in\mathcal{I}$ should be designed based on the specified RoCoF and steady-state frequency deviation, respectively, for frequency security, which will be reviewed in Section~\ref{ssec:allocation-pro}.
\subsubsection{Network Dynamics}
The network power fluctuations $\boldsymbol{p}_\mathrm{e} := \left(p_{\mathrm{e},i}, i \in \mathcal{N} \right) \in \real^n$ are captured by a linearized power flow model~\cite{Purchala2005dc-flow}:
\begin{align}
 \hat {\boldsymbol{p}}_\mathrm{e}(s) = \frac{\boldsymbol{L}_\mathrm{B}}{s} \hat {\boldsymbol{\omega}}(s)\;,\label{eq:N}
\end{align}
where $\boldsymbol{L}_\mathrm{B}\in \real^{n\times n}$ is an undirected weighted Laplacian matrix of the transimission network with its $ij$-th element being
\[
L_{\mathrm{B},{ij}}=\Omega_0\partial_{\theta_j}{\sum_{l=1}^n|V_i||V_l|B_{il}\sin(\theta_i-\theta_l)}\Bigr|_{\boldsymbol{\theta}=\boldsymbol{\theta}_0}.
\]
Here, $\boldsymbol{\theta} := \left(\theta_i, i \in \mathcal{N} \right) \in \real^n$ are the voltage angles with $\boldsymbol{\theta}_0$ being the equilibrium angles (in $\SI{}{\radian}$), $|V_i|$ is the (constant) voltage magnitude at bus $i$ (in $\SI{}{\pu}$), $B_{ij}$ is the line $\{i,j\}$ susceptance (in $\SI{}{\pu}$), and $\Omega_0:=2\pi F_0$ is the nominal angular frequency (in $\SI{}{\radian/s}$) with $F_0$ being $\SI{50}{\hertz}$ or $\SI{60}{\hertz}$
depending on a particular system. 
\subsubsection{Closed-Loop Dynamics} 
The frequency deviations $\boldsymbol{\omega}$ in response to the power injection changes $\boldsymbol{p}_\mathrm{in}$ in this system have been shown to be approximately given by 
\begin{equation}\label{eq:sys-fre}
   \!\!\!\hat {\boldsymbol{\omega}}(s)\!\approx\!\frac{\mathbbold{1}_n \mathbbold{1}_n^T\hat{\boldsymbol{p}}_\mathrm{in}(s)}{(\sum_{i \in \mathcal{I}}m_{i} \!+\!\sum_{i \in \mathcal{G}} m_i)s+(\sum_{i \in \mathcal{I}} d_{i} \!+\!\sum_{i \in \mathcal{G}}d_i)} 
\end{equation}
when the network is tightly-connected~\cite{jiang2021lcss}, which is beneficial for frequency security since all buses exhibit coherent first-order dynamics that naturally has no frequency Nadir following sudden power disturbances.

\subsection{Optimal Frequency Shaping Control Allocation Problem}\label{ssec:allocation-pro}
Based on the model just presented, we are particularly interested in the optimal allocation of grid-forming frequency shaping control resources. Precisely, we would like to tune parameters of frequency shaping control over the network to minimize the transient control effort required from both generators and inverters to meet frequency security specification following major disturbances. 

To formally define this problem, we consider the case where the system in Fig.~\ref{fig:model} experiences sudden step power disturbances $\boldsymbol{p}_\mathrm{in} =\boldsymbol{u}_0 \mathds{U}_{ t \geq 0 }$ with $\boldsymbol{u}_0 := \left(u_{0,i}, i \in \mathcal{N} \right) \in \real^n$ being an arbitrary vector direction that allows for power disturbances of different magnitudes at individual buses and $\mathds{U}_{ t \geq 0 } $ representing the unit-step function. Supposing that the expected maximum magnitude of the net power imbalance that the system should survive is $|\sum_{i=1}^n u_{0,i}|=\Delta P$ (in $\SI{}{\pu}$), \cite[Theorem 1]{jiang2021lcss} suggests to tune inverter parameters such that 
\begin{align}\label{eq:I_parameters}
  \sum_{i \in \mathcal{I}}m_{i} \!+\!\sum_{i \in \mathcal{G}} m_i=\dfrac{\Delta P}{|\dot{\omega}|_{\mathrm{d}\infty}}\,,\quad 
\sum_{i \in \mathcal{I}} d_{i} \!+\!\sum_{i \in \mathcal{G}}d_i=\dfrac{\Delta P}{\Delta\omega_{ \mathrm{d}}}\,,
\end{align}
and \eqref{eq:thm1-condition} simultaneously hold to shape the coherent frequency dynamics into a first-order one described by \eqref{eq:sys-fre} while achieving the desired magnitude of RoCoF $|\dot{\omega}|_{\mathrm{d}\infty}$ (in $\SI{}{\per\second}$) and steady-state frequency deviation $\Delta\omega_{ \mathrm{d}}$ (in $\SI{}{\pu}$), the two most important measures for frequency security. 

Since the target coherent dynamics can be guaranteed as long as \eqref{eq:thm1-condition} and \eqref{eq:I_parameters} hold, it is worth exploring how to optimally allocate frequency shaping control parameters, i.e., $m_{i}$, $d_{i}$, $\rho_{i}$, and $\sigma_i$, $\forall i \in \mathcal{I}$, to minimize the control effort needed from generators and inverters during a transient following net power imbalance of magnitude $\Delta P$, i.e.,
\begin{align}\label{eq:obj-up}
    C_{\infty,\Delta P}\!:=&\!\sum_{i \in \mathcal{N}}\int_{0}^\infty\!\left[-(u_{\mathrm{P},i}-u_{\mathrm{P},i}(\infty))\right]^2\  \mathrm{d}t\nonumber\\=&\!\int_{0}^\infty \!\!\left(\boldsymbol{u}_\mathrm{P}-\boldsymbol{u}_\mathrm{P}(\infty)\right)^T\!\!\left(\boldsymbol{u}_\mathrm{P}-\boldsymbol{u}_\mathrm{P}(\infty)\right) \mathrm{d}t
\end{align}
with $\boldsymbol{u}_\mathrm{P} := \left(u_{\mathrm{P},i}, i \in \mathcal{N} \right) =\boldsymbol{p}_\mathrm{in}-\boldsymbol{p}_\mathrm{e}$ as illustrated in Fig.~\ref{fig:model}, when these constraints are satisfied. Note that $C_{\infty,\Delta P}$ defined above captures the control effort throughout the post-fault transient by excluding the effect of steady-state control power output. This ensures that $C_{\infty,\Delta P}$ serves as a well-defined metric for sizing the energy capacity of storage units required to execute primary frequency control~\cite{knap2015sizing,jiangtps2021} by preventing it from being dominated by $\boldsymbol{u}_\mathrm{P}(\infty)$. 

However, neither the objective function \eqref{eq:obj-up} nor the $s$-domain constraint \eqref{eq:thm1-condition} is easy to address directly in optimization. Thus, in the remainder of this manuscript, we will first explicitly formulate this as a well-understood constrained optimization problem and then solve it using the existing successive convexification algorithm.


\section{Reformulation of Optimal Grid-Forming Frequency Shaping Control Allocation}

\subsection{Objective Function}
We now approximate the objective function $C_{\infty,\Delta P}$ in \eqref{eq:obj-up} through the scaled $\mathcal{H}_2$ norm of a suitably defined system derived from the system in Fig.~\ref{fig:model}. Loosely speaking, the $\mathcal{H}_2$ norm is a norm of a system response resulting
from certain input signals~\cite{Tegling2015tcns}, which can yield various physical interpretations depending on the choice of the input and output signals. For example, in the context of power systems, it has been used to quantify power losses~\cite{Tegling2015tcns}, noise sensitivity~\cite{y2017cdc, Weitenberg2018TAC, jiang2021tac}, transient stability~\cite{Poolla2019tps}, and so on. Particularly, when a system is perturbed by a unit-impulse $\delta(t)$ at each input channel individually, the squared $\mathcal{H}_2$ norm can be interpreted as the sum of the squared $\mathcal{L}_2$ norms of resulting outputs~\cite{Tegling2015tcns}. However, the system in Fig.~\ref{fig:model} has input signals $\boldsymbol{p}_\mathrm{in}$ modeled as step changes to capture the major contingencies, which precludes the direct application of the preceding interpretation. This motivates us to transform the original system into a system, denoted by $G$, with input signals $\dot{\boldsymbol{p}}_\mathrm{in}$ and output signals $\left(\boldsymbol{u}_\mathrm{P}-\boldsymbol{u}_\mathrm{P}(\infty)\right)$, which makes it legitimate to quantify $C_{\infty,\Delta P}$ with the $\mathcal{H}_2$ norm since $\boldsymbol{p}_\mathrm{in} =\boldsymbol{u}_0 \mathds{U}_{ t \geq 0 }$ means that $\dot{\boldsymbol{p}}_\mathrm{in}=\boldsymbol{u}_0 \delta(t)$. Thus, the theorem below constructs such a system $G$, upon which $n$ impulse perturbation experiments can be conducted to measure $C_{\infty,\Delta P}$. More precisely, at the $i$-th experiment, apply $\dot{\boldsymbol{p}}_\mathrm{in}=-\Delta P\boldsymbol{e}_i \delta(t)$ to system $G$, where $\boldsymbol{e}_i\in\real^n$ denotes the $i$-th standard basis vector. Then the average of the squared $\mathcal{L}_2$ norms of the resulting outputs $\left(\boldsymbol{u}_\mathrm{P}-\boldsymbol{u}_\mathrm{P}(\infty)\right)$ over all $n$ experiments is deemed an estimation of $C_{\infty,\Delta P}$ following a net power imbalance of magnitude $\Delta P$. This thought lays the foundation for the following theorem.

\begin{thm}[Control effort approximated by $\mathcal{H}_2$ norm]\label{thm:cost} The transient control effort defined in \eqref{eq:obj-up} for the system in Fig.~\ref{fig:model} is approximately given by
\begin{align}
    C_{\infty,\Delta P}\approx&\ \dfrac{(\Delta P)^2}{n}\|G\|^2_{\mathcal{H}_2}=\dfrac{(\Delta P)^2}{n}\tr{\boldsymbol{B}^T\boldsymbol{X}\boldsymbol{B}}\,,\label{eq:h2-obj}
\end{align}
where $G$ denotes the following system:
\begin{subequations}\label{eq:G-ss}
\begin{align}
    &\!\!\begin{bmatrix}
        \dot{\boldsymbol{\omega}}\\\ddot{\boldsymbol{\omega}}\\
        \dot{\boldsymbol{q}}
    \end{bmatrix}\!\!=\!\underbrace{\begin{bmatrix}
        \mathbbold{0}_{n\times n}&\!\!\boldsymbol{I}_{n}&\!\!\mathbbold{0}_{n\times n}\\-\boldsymbol{M}^{-1}(\boldsymbol{T}^{-1}\!\boldsymbol{R}\!+\!\boldsymbol{L}_\mathrm{B})&\!\!-\boldsymbol{M}^{-1}\!\boldsymbol{D}&\!\!-(\boldsymbol{M}\boldsymbol{T})^{-1}\\-\boldsymbol{T}^{-1}\!\boldsymbol{R}&\mathbbold{0}_{n\times n}&-\boldsymbol{T}^{-1}
    \end{bmatrix}}_{=:\boldsymbol{A}}\!\!\!\begin{bmatrix}
      \boldsymbol{\omega}\\ \dot{\boldsymbol{\omega}}\\ \boldsymbol{q}
    \end{bmatrix}\nonumber\\&\qquad\ \ +\!\underbrace{\begin{bmatrix}
        \mathbbold{0}_{n\times n}\\\boldsymbol{M}^{-1}\\\mathbbold{0}_{n\times n}
    \end{bmatrix}}_{=:\boldsymbol{B}}\dot{\boldsymbol{p}}_\mathrm{in}\,,\label{eq:x-ss}\\
    &\boldsymbol{u}_\mathrm{P}-\boldsymbol{u}_\mathrm{P}(\infty)\nonumber\\&=\!\underbrace{\left[\boldsymbol{I}_{n}-\dfrac{ (\boldsymbol{D}+\boldsymbol{R})\mathbbold{1}_{n}\mathbbold{1}_{n}^T}{\mathbbold{1}_{n}^T\boldsymbol{D}\mathbbold{1}_{n}}\right]\!\begin{bmatrix}
        \boldsymbol{D}&\boldsymbol{M}&-\boldsymbol{I}_{n}
    \end{bmatrix}}_{=:\boldsymbol{C}}\!\!\begin{bmatrix}
      \boldsymbol{\omega}\\ \dot{\boldsymbol{\omega}}\\ \boldsymbol{q}
    \end{bmatrix}\!\,,\label{eq:up-G}
\end{align} 
\end{subequations}
and $\boldsymbol{X}$ is the observability Gramian that can be uniquely solved from the following Lyapunov equation with an additional constraint via $\boldsymbol{v}:=\begin{bmatrix}
\mathbbold{1}_{n}^T&\mathbbold{0}_{n}^T&-(\boldsymbol{R}\mathbbold{1}_{n})^T\end{bmatrix}^T$:
\begin{subequations}\label{eq:X-lyapunov-con}
\begin{align}
    &\boldsymbol{A}^T \boldsymbol{X} + \boldsymbol{X}\boldsymbol{A} = -\boldsymbol{C}^T\boldsymbol{C}\,,\label{eq:Lyap}\\
    &\boldsymbol{X}\boldsymbol{v}=\mathbbold{0}_{3n}\,.\label{eq:v}
\end{align}    
\end{subequations}
Here, $\boldsymbol{M} := \diag{m_i, i\!\in\!\mathcal{N}}\!\in\!\real^{n \times n}$, $\boldsymbol{D} := \diag{d_{i}, i \!\in\! \mathcal{N}} \in\real^{n \times n}$, and 
\begin{align*}
\boldsymbol{T}\!:=&\!\begin{bmatrix}
        \diag{\tau_{i}, i \!\in \!\mathcal{G}}& \mathbbold{0}_{|\mathcal{G}|\times|\mathcal{I}|}\\\mathbbold{0}_{|\mathcal{I}|\times|\mathcal{G}|}&\diag{\sigma_{i}, i \!\in \!\mathcal{I}}
    \end{bmatrix}\!\,,\\
    \boldsymbol{R}\!:=&{\setlength{\arraycolsep}{2pt}\begin{bmatrix}
        \diag{{\alpha_{\mathrm{t},i}}, i \!\in \!\mathcal{G}}& \mathbbold{0}_{|\mathcal{G}|\times|\mathcal{I}|}\\\mathbbold{0}_{|\mathcal{I}|\times|\mathcal{G}|}&-\diag{\rho_{i}, i \!\in \!\mathcal{I}}
    \end{bmatrix}} \!\,,\  \boldsymbol{q}\!:=\!\!\begin{bmatrix}
       \left(q_{\mathrm{t},i}, i \!\in \!\mathcal{G} \right) \\
       \left(q_{\mathrm{r},i}, i\! \in \!\mathcal{I} \right)
    \end{bmatrix}
\end{align*}
with $q_{\mathrm{r},i}$, $\forall i\in \mathcal{I}$, denoting an internal state of grid-forming shaping control.
\end{thm}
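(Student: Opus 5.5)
The proof has three steps. First, derive a realization of the closed loop in the coordinates $(\boldsymbol{\omega},\dot{\boldsymbol{\omega}},\boldsymbol{q})$. Second, show that its output equals $\boldsymbol{u}_\mathrm{P}-\boldsymbol{u}_\mathrm{P}(\infty)$ when the input is $\dot{\boldsymbol{p}}_\mathrm{in}$. Third, justify the $\mathcal{H}_2$ interpretation and the Gramian characterization with the extra constraint \eqref{eq:v}.

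\textbf{Realization.} Write every bus uniformly as
\[
m_i\dot\omega_i=-d_i\omega_i+q_i+u_{\mathrm{P},i},\qquad T_i\dot q_i=-R_i\omega_i-q_i .
\]
For $i\in\mathcal{G}$ this is \eqref{eq:gen-ode}. For $i\in\mathcal{I}$ it realizes \eqref{eq:hi-inverter} with $q_i=q_{\mathrm{r},i}$ and $R_i=-\rho_i$, which can be checked by taking Laplace transforms. In vector form, $\boldsymbol{M}\dot{\boldsymbol{\omega}}=-\boldsymbol{D}\boldsymbol{\omega}+\boldsymbol{q}+\boldsymbol{u}_\mathrm{P}$ with $\boldsymbol{u}_\mathrm{P}=\boldsymbol{p}_\mathrm{in}-\boldsymbol{p}_\mathrm{e}$ and $\dot{\boldsymbol{p}}_\mathrm{e}=\boldsymbol{L}_\mathrm{B}\boldsymbol{\omega}$ from \eqref{eq:N}.

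Differentiate the swing equation once to remove the integrator in the network:
\[
\boldsymbol{M}\ddot{\boldsymbol{\omega}}=-\boldsymbol{D}\dot{\boldsymbol{\omega}}+\dot{\boldsymbol{q}}+\dot{\boldsymbol{p}}_\mathrm{in}-\boldsymbol{L}_\mathrm{B}\boldsymbol{\omega},
\qquad
\dot{\boldsymbol{q}}=-\boldsymbol{T}^{-1}\boldsymbol{R}\boldsymbol{\omega}-\boldsymbol{T}^{-1}\boldsymbol{q}.
\]
Substituting the second equation into the first gives exactly the matrices $\boldsymbol{A}$ and $\boldsymbol{B}$ in \eqref{eq:x-ss}. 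From the swing equation, $\boldsymbol{u}_\mathrm{P}=\boldsymbol{D}\boldsymbol{\omega}+\boldsymbol{M}\dot{\boldsymbol{\omega}}-\boldsymbol{q}$.

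\textbf{Steady-state correction.} Under a step $\boldsymbol{u}_0$, the closed loop reaches a synchronous equilibrium with $\boldsymbol{\omega}(\infty)=\omega_\infty\mathbbold{1}_n$ and $\boldsymbol{q}(\infty)=-\boldsymbol{R}\mathbbold{1}_n\omega_\infty$. Summing the swing equations, the Laplacian term cancels, and using \eqref{eq:thm1-condition} at $s=0$ (so $\mathbbold{1}^T\boldsymbol{R}\mathbbold{1}=0$) gives $\omega_\infty=\mathbbold{1}^T\boldsymbol{u}_0/\mathbbold{1}^T\boldsymbol{D}\mathbbold{1}$. Hence
\[
\boldsymbol{u}_\mathrm{P}(\infty)=(\boldsymbol{D}+\boldsymbol{R})\mathbbold{1}\,\omega_\infty .
\]

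I then need to write $\boldsymbol{u}_\mathrm{P}(\infty)$ as a linear function of the current state, which is the role of the projection in \eqref{eq:up-G}. The argument runs as follows.
\begin{itemize}
\item Starting from zero initial state, $\mathbbold{1}^T\boldsymbol{u}_\mathrm{P}=\mathbbold{1}^T\boldsymbol{u}_0$ in steady state.
\item More usefully, $\mathbbold{1}^T\boldsymbol{u}_\mathrm{P}(t)=\mathbbold{1}^T\boldsymbol{u}_0-\mathbbold{1}^T\boldsymbol{p}_\mathrm{e}(t)$, and $\mathbbold{1}^T\boldsymbol{p}_\mathrm{e}\equiv0$ because $\mathbbold{1}^T\boldsymbol{L}_\mathrm{B}=0$.
\item So $\mathbbold{1}^T(\boldsymbol{D}\boldsymbol{\omega}+\boldsymbol{M}\dot{\boldsymbol{\omega}}-\boldsymbol{q})=\mathbbold{1}^T\boldsymbol{u}_0$ holds at every $t>0$, exactly.
\end{itemize}
Therefore $\omega_\infty=\mathbbold{1}^T\boldsymbol{u}_\mathrm{P}(t)/\mathbbold{1}^T\boldsymbol{D}\mathbbold{1}$, and
\[
\boldsymbol{u}_\mathrm{P}-\boldsymbol{u}_\mathrm{P}(\infty)=\Bigl[\boldsymbol{I}-\tfrac{(\boldsymbol{D}+\boldsymbol{R})\mathbbold{1}\mathbbold{1}^T}{\mathbbold{1}^T\boldsymbol{D}\mathbbold{1}}\Bigr]\boldsymbol{u}_\mathrm{P},
\]
which is $\boldsymbol{C}$ applied to the state.

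\textbf{Marginal mode and Gramian.} The matrix $\boldsymbol{A}$ has a zero eigenvalue with right eigenvector $\boldsymbol{v}$. To check this: $\boldsymbol{L}_\mathrm{B}\mathbbold{1}=0$; the $\dot{\boldsymbol{\omega}}$ row gives $-\boldsymbol{M}^{-1}\boldsymbol{T}^{-1}\boldsymbol{R}\mathbbold{1}+(\boldsymbol{M}\boldsymbol{T})^{-1}\boldsymbol{R}\mathbbold{1}=0$; the $\boldsymbol{q}$ row gives $-\boldsymbol{T}^{-1}\boldsymbol{R}\mathbbold{1}+\boldsymbol{T}^{-1}\boldsymbol{R}\mathbbold{1}=0$. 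Also $\boldsymbol{C}\boldsymbol{v}=0$: we have $[\boldsymbol{D}\ \boldsymbol{M}\ -\boldsymbol{I}]\boldsymbol{v}=(\boldsymbol{D}+\boldsymbol{R})\mathbbold{1}$, and the projection annihilates it because $\mathbbold{1}^T\boldsymbol{R}\mathbbold{1}=0$. So the zero mode is unobservable, and the remaining modes are Hurwitz, which I take from the tightly connected, stable-parameter assumptions of \cite{jiang2021lcss}.

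With this, $\|G\|_{\mathcal{H}_2}^2=\int_0^\infty\tr{\boldsymbol{B}^T e^{\boldsymbol{A}^Tt}\boldsymbol{C}^T\boldsymbol{C}e^{\boldsymbol{A}t}\boldsymbol{B}}\,\mathrm{d}t$ is finite. Let $\boldsymbol{X}=\int_0^\infty e^{\boldsymbol{A}^Tt}\boldsymbol{C}^T\boldsymbol{C}e^{\boldsymbol{A}t}\,\mathrm{d}t$. This integral converges because $\boldsymbol{C}e^{\boldsymbol{A}t}$ decays: its component along the zero mode is killed by $\boldsymbol{C}$. Integrating $\tfrac{d}{dt}\bigl(e^{\boldsymbol{A}^Tt}\boldsymbol{C}^T\boldsymbol{C}e^{\boldsymbol{A}t}\bigr)$ shows that $\boldsymbol{X}$ satisfies \eqref{eq:Lyap}, and $\boldsymbol{X}\boldsymbol{v}=0$ because $e^{\boldsymbol{A}t}\boldsymbol{v}=\boldsymbol{v}$ and $\boldsymbol{C}\boldsymbol{v}=0$.

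For uniqueness, take the difference $\boldsymbol{Y}$ of two solutions. It satisfies $\boldsymbol{A}^T\boldsymbol{Y}+\boldsymbol{Y}\boldsymbol{A}=0$ and $\boldsymbol{Y}\boldsymbol{v}=0$. Decompose along the simple zero eigenvalue and the Hurwitz invariant subspace. The Sylvester operator is invertible on every eigenvalue pair except $(0,0)$, and the symmetric $(0,0)$ component is a multiple of $\boldsymbol{w}\boldsymbol{w}^T$, where $\boldsymbol{w}$ is the left eigenvector; the constraint $\boldsymbol{Y}\boldsymbol{v}=0$ forces that multiple to zero since $\boldsymbol{w}^T\boldsymbol{v}\neq0$. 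Hence $\boldsymbol{Y}=0$.

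\textbf{The approximation.} The $i$-th impulse experiment $\dot{\boldsymbol{p}}_\mathrm{in}=-\Delta P\boldsymbol{e}_i\delta$ yields cost $(\Delta P)^2\boldsymbol{e}_i^T\boldsymbol{B}^T\boldsymbol{X}\boldsymbol{B}\boldsymbol{e}_i$. Averaging over $i$ gives \eqref{eq:h2-obj}. The sign ``$\approx$'' reflects that the true disturbance direction $\boldsymbol{u}_0$ is unknown and is replaced by this average over unit-bus disturbances.

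\textbf{Main obstacle.} The delicate step is uniqueness of $\boldsymbol{X}$ with the marginal eigenvalue: I need the zero eigenvalue of $\boldsymbol{A}$ to be simple and every other eigenvalue to have negative real part. I would try to establish this through a passivity/Lyapunov argument on the original model, falling back on the stability assumption of \cite{jiang2021lcss} if that fails.
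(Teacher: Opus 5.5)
Your proposal is correct and follows essentially the same route as the paper. Both arguments differentiate the swing equation to obtain $(\boldsymbol{A},\boldsymbol{B})$, and both use $\mathbbold{1}_n^T\boldsymbol{u}_\mathrm{P}(t)=\mathbbold{1}_n^T\boldsymbol{u}_0$ with $\mathbbold{1}_n^T\boldsymbol{R}\mathbbold{1}_n=0$ to write $\boldsymbol{u}_\mathrm{P}(\infty)$ in terms of the current state. Both identify $\boldsymbol{v}$ as an unobservable zero mode and use $\boldsymbol{X}\boldsymbol{v}=\mathbbold{0}_{3n}$ to remove the $\boldsymbol{w}\boldsymbol{w}^T$ direction; this is the paper's $\boldsymbol{\zeta}\boldsymbol{\zeta}^T$, for which it computes $\boldsymbol{\zeta}^T\boldsymbol{v}=\sum_i d_i>0$ explicitly. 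Your version is somewhat more careful in two respects: your Sylvester-operator uniqueness argument justifies why the homogeneous solutions are only multiples of $\boldsymbol{w}\boldsymbol{w}^T$, which the paper asserts, and you state explicitly that the nonzero modes must be Hurwitz, which the paper leaves implicit.
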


\begin{proof} 
We start by transforming the system in Fig.~\ref{fig:model} into a system with input signals $\dot{\boldsymbol{p}}_\mathrm{in}$ and output signals $\left(\boldsymbol{u}_\mathrm{P}-\boldsymbol{u}_\mathrm{P}(\infty)\right)$, which enables us to quantify control effort in transient response to step power disturbances through $\mathcal{H}_2$ norm.

The generator dynamics \eqref{eq:gen-ode} and the analogous time-domain version of inverter dynamics \eqref{eq:hi-inverter} under grid-forming frequency shaping control on each bus can be stacked as
\begin{subequations}\label{eq:sys-dyn-vec}
\begin{align}
    \boldsymbol{M} \dot{\boldsymbol{\omega}} =& -
    \boldsymbol{D} \boldsymbol{\omega} + \boldsymbol{q} + \boldsymbol{u}_\mathrm{P}\,,\label{eq:swing}\\
     \boldsymbol{T}\dot{\boldsymbol{q}}=&-\boldsymbol{R}\boldsymbol{\omega}-\boldsymbol{q}\,.\label{eq:gover}\end{align}
\end{subequations}
Taking the time derivative to \eqref{eq:swing} and then substituting \eqref{eq:gover} and the fact that $\dot{\boldsymbol{u}}_\mathrm{P} =\dot{\boldsymbol{p}}_\mathrm{in}-\dot{\boldsymbol{p}}_\mathrm{e}$ into it yields
\begin{align}\label{eq:ddomega}
    \boldsymbol{M} \ddot{\boldsymbol{\omega}} =& -
    \boldsymbol{D} \dot{\boldsymbol{\omega}} + \boldsymbol{T}^{-1}(-\boldsymbol{R} \boldsymbol{\omega}-\boldsymbol{q}) + \dot{\boldsymbol{p}}_\mathrm{in}-\dot{\boldsymbol{p}}_\mathrm{e}\,,
\end{align}
where $\dot{\boldsymbol{p}}_\mathrm{e}=\boldsymbol{L}_\mathrm{B}\boldsymbol{\omega}$ by \eqref{eq:N}. Thus, \eqref{eq:ddomega} can be written as 
\begin{align*}
    \boldsymbol{M} \ddot{\boldsymbol{\omega}} =& -
    \boldsymbol{D} \dot{\boldsymbol{\omega}} - \boldsymbol{T}^{-1}(\boldsymbol{R} \boldsymbol{\omega}+\boldsymbol{q}) + \dot{\boldsymbol{p}}_\mathrm{in}-\boldsymbol{L}_\mathrm{B}\boldsymbol{\omega}\,,
\end{align*}
which together with \eqref{eq:gover} can be turned into the standard state-space form \eqref{eq:x-ss}. 

To express $\left(\boldsymbol{u}_\mathrm{P}-\boldsymbol{u}_\mathrm{P}(\infty)\right)$ in terms of states, we first notice from \eqref{eq:swing} that 
\begin{align}\label{eq:up-ss}
    \boldsymbol{u}_\mathrm{P}=\boldsymbol{D} \boldsymbol{\omega}+\boldsymbol{M} \dot{\boldsymbol{\omega}} - \boldsymbol{q}
\end{align}
and then investigate the equilibria of \eqref{eq:x-ss} that clearly satisfy
\begin{subequations}
\begin{align}
&\mathbbold{0}_{n}=\ \dot{\boldsymbol{\omega}}(\infty)\,,\label{eq:domega_inf}\\
&\mathbbold{0}_{n}=(\boldsymbol{T}^{-1}\!\boldsymbol{R}\!+\!\boldsymbol{L}_\mathrm{B})\boldsymbol{\omega}(\infty)+\!\boldsymbol{D}\dot{\boldsymbol{\omega}}(\infty)+\boldsymbol{T}^{-1}\boldsymbol{q}(\infty)\,,\label{eq:ddomega_inf}\\
    &\boldsymbol{q}(\infty)=-\boldsymbol{R}\boldsymbol{\omega}(\infty)\,.\label{eq:gover-ss}
\end{align}    
\end{subequations}
Substituting \eqref{eq:domega_inf} and \eqref{eq:gover-ss} to \eqref{eq:ddomega_inf}, we get $\boldsymbol{L}_\mathrm{B}\boldsymbol{\omega}(\infty)=\mathbbold{0}_{n}$, which implies that 
\begin{align}
\boldsymbol{\omega}(\infty)=\omega_\mathrm{syn}\mathbbold{1}_{n}   \label{eq:omega-ss} 
\end{align}
for some synchronous frequency $\omega_\mathrm{syn}\in \real$. Combining \eqref{eq:omega-ss} and \eqref{eq:gover-ss} with the steady state of \eqref{eq:swing}, we obtain 
\begin{align}\label{eq:up-wsyn}
\boldsymbol{u}_\mathrm{P}(\infty)= \boldsymbol{D} \boldsymbol{\omega}(\infty) - \boldsymbol{q}(\infty)=\omega_\mathrm{syn}(\boldsymbol{D}+\boldsymbol{R})\mathbbold{1}_{n}\,. 
\end{align}
To solve for $\omega_\mathrm{syn}$, we observe that, $\forall t\geq 0$, 
\begin{align}\label{eq:dp-up}
\mathbbold{1}_{n}^T\boldsymbol{u}_\mathrm{P}=&\ \mathbbold{1}_{n}^T(\boldsymbol{p}_\mathrm{in}-\boldsymbol{p}_\mathrm{e})=\mathbbold{1}_{n}^T\boldsymbol{u}_0-\mathbbold{1}_{n}^T \boldsymbol{L}_\mathrm{B} \int\boldsymbol{\omega} \mathrm{d}t\nonumber\\=&-\Delta P-\mathbbold{0}_{n}^T\int\boldsymbol{\omega}\mathrm{d}t=-\Delta P\,.    
\end{align}
This combined with \eqref{eq:up-wsyn} gives $\mathbbold{1}_{n}^T\omega_\mathrm{syn}(\boldsymbol{D}+\boldsymbol{R})\mathbbold{1}_{n}=\mathbbold{1}_{n}^T\boldsymbol{u}_\mathrm{P}(\infty)=-\Delta P$, from which we can solve for
\begin{align}\label{eq:wsyn-exp}
    \omega_\mathrm{syn}=-\dfrac{\Delta P}{\mathbbold{1}_{n}^T(\boldsymbol{D}+\boldsymbol{R})\mathbbold{1}_{n}}\,.
\end{align}
Notably, since the constraint \eqref{eq:thm1-condition} must hold for all $s$, the $s=0$ case forces $\mathbbold{1}_{n}^T\boldsymbol{R}\mathbbold{1}_{n}=\sum_{i \in \mathcal{G}}\alpha_{\mathrm{t},i}-\sum_{i \in \mathcal{I}} \rho_i=0$, which allows us to simplify \eqref{eq:wsyn-exp} as 
\begin{align}\label{eq:wsyn-exp-simp}
    \omega_\mathrm{syn}=-\dfrac{\Delta P}{\mathbbold{1}_{n}^T\boldsymbol{D}\mathbbold{1}_{n}}\,.
\end{align}
Substituting \eqref{eq:wsyn-exp-simp} to \eqref{eq:up-wsyn} yields 
\begin{align}\label{eq:up-dp}
\boldsymbol{u}_\mathrm{P}(\infty)= -\dfrac{ (\boldsymbol{D}+\boldsymbol{R})\mathbbold{1}_{n}\Delta P}{\mathbbold{1}_{n}^T\boldsymbol{D}\mathbbold{1}_{n}}\,. 
\end{align}
The output expression \eqref{eq:up-G} in terms of states is a direct application of \eqref{eq:dp-up} and \eqref{eq:up-ss} to \eqref{eq:up-dp}, i.e., 
\begin{align*}
\boldsymbol{u}_\mathrm{P}-\boldsymbol{u}_\mathrm{P}(\infty)=&\ \boldsymbol{u}_\mathrm{P}-\dfrac{ (\boldsymbol{D}+\boldsymbol{R})\mathbbold{1}_{n}\mathbbold{1}_{n}^T\boldsymbol{u}_\mathrm{P}}{\mathbbold{1}_{n}^T\boldsymbol{D}\mathbbold{1}_{n}}\\=&\left[\boldsymbol{I}_{n}\!-\dfrac{ (\boldsymbol{D}+\boldsymbol{R})\mathbbold{1}_{n}\mathbbold{1}_{n}^T}{\mathbbold{1}_{n}^T\boldsymbol{D}\mathbbold{1}_{n}}\right]\!\left(\boldsymbol{D} \boldsymbol{\omega}+\boldsymbol{M} \dot{\boldsymbol{\omega}} - \boldsymbol{q}\right)\,. 
\end{align*}

Clearly, the system $G$ in \eqref{eq:G-ss} exhibits the desired form with input $\dot{\boldsymbol{p}}_\mathrm{in}$ and output $\left(\boldsymbol{u}_\mathrm{P}-\boldsymbol{u}_\mathrm{P}(\infty)\right)$. 
We are now ready to connect the control effort $C_{\infty,\Delta P}$ with $\|G\|^2_{\mathcal{H}_2}$. Note that $\|G\|^2_{\mathcal{H}_2}$ is equal to the sum of the squared $\mathcal{L}_2$ norms of the resulting $\left(\boldsymbol{u}_\mathrm{P}-\boldsymbol{u}_\mathrm{P}(\infty)\right)$ over all $n$ experiments, where in each the
system $G$ is fed a unit-impulse at the $i$-th input channel, i.e., $\dot{\boldsymbol{p}}_\mathrm{in}=\boldsymbol{e}_i \delta(t)$~\cite{Tegling2015tcns}. Hence, when we customize the experiments by scaling the impulse in each experiment by $\Delta P$, it is easy to see the squared $\mathcal{L}_2$ norms of $\left(\boldsymbol{u}_\mathrm{P}-\boldsymbol{u}_\mathrm{P}(\infty)\right)$ in each experiment is scaled by $(\Delta P)^2$, which results in the coefficient $(\Delta P)^2/n$ before $\|G\|^2_{\mathcal{H}_2}$ in \eqref{eq:h2-obj} as we use the mean of these $n$ customized experiments to approximate $C_{\infty,\Delta P}$.

The proof of the formula for calculating $\|G\|^2_{\mathcal{H}_2}$ in \eqref{eq:h2-obj}, together with constraints \eqref{eq:X-lyapunov-con}, follows a similar argument to that for \cite[Lemma~1]{Poolla2017TAC}. The key idea is summarized below for completeness. First, it is a standard result that $\|G\|^2_{\mathcal{H}_2}:=\sum_{i \in \mathcal{N}}\|G(t)\boldsymbol{e}_i\|^2_{\mathcal{L}_2}=\int_0^{\infty}\tr{G(t)^TG(t)} \mathrm{d} t=\tr{\boldsymbol{B}^T\tilde {\boldsymbol{X}}\boldsymbol{B}}$ with $\tilde {\boldsymbol{X}}:=\int_0^{\infty}e^{\boldsymbol{A}^Tt}\boldsymbol{C}^T\boldsymbol{C}e^{\boldsymbol{A}t}\mathrm{d} t$~\cite[Section 2.3.1]{Jiang2021jhu}. For the simple case where $\boldsymbol{A}$ is asymptotically stable, $\tilde {\boldsymbol{X}}$ is the unique solution to the Lyapunov equation~\eqref{eq:Lyap}. However, in our setting, it is not hard to show that $\rank{\boldsymbol{A}^T}=\rank{\boldsymbol{A}}=3n-1$. More precisely, $\boldsymbol{A}$ has an unobservable $0$ eigenvalue associated with eigenvector $\boldsymbol{v}$, i.e., $\boldsymbol{C}e^{\boldsymbol{A}t}\boldsymbol{v}=\boldsymbol{C}\sum_{k=0}^{\infty}t^k\boldsymbol{A}^k/(k!)\boldsymbol{v}=\boldsymbol{C}\boldsymbol{v}=\mathbbold{0}_{n}$, $\forall t\geq 0$. This leads to the issue that~\eqref{eq:Lyap} has a solution family parametrized by $\boldsymbol{X}(a):=\tilde {\boldsymbol{X}}+a\boldsymbol{\zeta}\boldsymbol{\zeta}^T$ for $a\in\real$ with $\boldsymbol{\zeta}:=\begin{bmatrix}
(\boldsymbol{D}\mathbbold{1}_{n})^T&(\boldsymbol{M}\mathbbold{1}_{n})^T&-\mathbbold{1}_{n}^T\end{bmatrix}^T$ being the kernel of $\boldsymbol{A}^T$, i.e., $\boldsymbol{A}^T\boldsymbol{\zeta}=\mathbbold{0}_{3n}$. Finally, by imposing an additional constraint~\eqref{eq:v}, one can restrict the solution exactly to the desired $\tilde {\boldsymbol{X}}$. To see this, we note that $\tilde {\boldsymbol{X}}$ inherently satisfy ~\eqref{eq:v} since $\tilde {\boldsymbol{X}}\boldsymbol{v}=\int_0^{\infty}e^{\boldsymbol{A}^Tt}\boldsymbol{C}^T\boldsymbol{C}e^{\boldsymbol{A}t}\boldsymbol{v}\mathrm{d} t=\mathbbold{0}_{3n}$, while $\boldsymbol{X}(a)$ satisfies~\eqref{eq:v} only if $\mathbbold{0}_{3n}=(\tilde {\boldsymbol{X}}+a\boldsymbol{\zeta}\boldsymbol{\zeta}^T)\boldsymbol{v}=\tilde {\boldsymbol{X}}\boldsymbol{v}+a\boldsymbol{\zeta}\boldsymbol{\zeta}^T\boldsymbol{v}=a\boldsymbol{\zeta}(\mathbbold{1}_{n}^T\boldsymbol{D}\mathbbold{1}_{n}+\mathbbold{1}_{n}^T\boldsymbol{R}\mathbbold{1}_{n})=a\boldsymbol{\zeta}(\sum_{i \in \mathcal{N}} d_i+0)=a(\sum_{i \in \mathcal{N}} d_i) \boldsymbol{\zeta}$. This implies that $a=0$ must hold since $\sum_{i \in \mathcal{N}} d_i>0$, which ensures that the only solution to \eqref{eq:X-lyapunov-con} is $\boldsymbol{X}(0)=\tilde {\boldsymbol{X}}$.
\end{proof}
\begin{rem}[Positive objective scaling invariance]\label{rem:co-cost}
The positive constant coefficient $(\Delta P)^2/n$ can be omitted from the objective function without changing optimizers, even if $\Delta P$ also appears in constraints. To see this, consider the constrained optimization problem $\min_{\boldsymbol{x}\in\mathcal{F}(\Delta P)}g(\Delta P)f(\boldsymbol{x})$, where $f(\boldsymbol{x})$ represents the non-scaled objective function, $\mathcal{F}(\Delta P)$ denotes the feasible set given $\Delta P$, and $g(\Delta P)>0$ is a positive scaling factor for any given $\Delta P>0$. Let $\mathcal{S}_1:=\arg \min_{\boldsymbol{x}\in\mathcal{F}(\Delta P)}g(\Delta P)f(\boldsymbol{x})$ and $\mathcal{S}_2:=\arg \min_{\boldsymbol{x}\in\mathcal{F}(\Delta P)}f(\boldsymbol{x})$. We can prove $\mathcal{S}_1=\mathcal{S}_2$ by showing $\mathcal{S}_1\subseteq\mathcal{S}_2$ and $\mathcal{S}_2\subseteq\mathcal{S}_1$. First, $\forall \boldsymbol{x}^*\in\mathcal{S}_1$, it holds that $g(\Delta P)f(\boldsymbol{x}^*)\leq g(\Delta P)f(\boldsymbol{x})$, $\forall \boldsymbol{x}\in\mathcal{F}(\Delta P)$. Dividing both sides by $g(\Delta P)>0$ preserves the inequality, which gives $f(\boldsymbol{x}^*)\leq f(\boldsymbol{x})$, $\forall \boldsymbol{x}\in\mathcal{F}(\Delta P)$. Thus, $\boldsymbol{x}^*\in\mathcal{S}_2$, which implies $\mathcal{S}_1\subseteq\mathcal{S}_2$. Similarly, one can show $\mathcal{S}_2\subseteq\mathcal{S}_1$.
\end{rem}

\subsection{$s$-Domain Constraint}
The $s$-domain constraint \eqref{eq:thm1-condition} required by shaping control plays a pivotal role in removing the coherent frequency Nadir~\cite{jiang2021lcss}. However, it is not straightforward how to handle it in optimization, which motivates us to simplify it into a standard equality constraint via the following theorem. 
\begin{thm}[Simplification of $s$-domain constraint]\label{thm:s-const} The $s$-domain constraint \eqref{eq:thm1-condition} is equivalent to
\begin{align}\label{eq:thm1-condition-quad}
    \boldsymbol{\rho}^{T} \boldsymbol{\Sigma}(\boldsymbol{\sigma}) \boldsymbol{\rho}
+
\boldsymbol{b}(\boldsymbol{\sigma})^{T}\boldsymbol{\rho}
+
c = 0\,,
\end{align}
where $\boldsymbol{\rho}:=\left(\rho_i, i \in \mathcal{I} \right) \in \real^{|\mathcal{I}|}$, $\boldsymbol{\sigma}:=\left(\sigma_i, i \in \mathcal{I} \right) \in \real^{|\mathcal{I}|}$, $\boldsymbol{\Sigma}(\boldsymbol{\sigma})\in \real^{|\mathcal{I}|\times |\mathcal{I}|}$ is a positive semi-definite matrix with \begin{align}\label{eq:Sig_ij}
    \Sigma_{ij}:= \frac{1}{
\sigma_i + \sigma_j
}\,,
\end{align}
and 
\begin{align*}
    \boldsymbol{b}(\boldsymbol{\sigma})\!:=\!\!\left(\!-2\!\sum_{j \in \mathcal{G}}\frac{ \alpha_{\mathrm{t},j}}{\sigma_i\! +\!\tau_j}, i \!\in \!\mathcal{I}\! \right) \!\!\in \!\real^{|\mathcal{I}|}\!\,,\ c\!:=\!\!\sum_{i \in \mathcal{G}}\sum_{j \in \mathcal{G}}\frac{\alpha_{\mathrm{t},i} \alpha_{\mathrm{t},j}}{\tau_i \!+\!\tau_j}\,.
\end{align*}
\end{thm}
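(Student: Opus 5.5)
The plan is to view \eqref{eq:thm1-condition} as an identity between two rational functions of $s$, and to convert it into a single scalar equation through the squared $\mathcal{L}_2$ norm of their difference on the imaginary axis (equivalently, in the time domain). Define
\[
f(t):=\sum_{i\in\mathcal{I}}\frac{\rho_i}{\sigma_i}e^{-t/\sigma_i}-\sum_{j\in\mathcal{G}}\frac{\alpha_{\mathrm{t},j}}{\tau_j}e^{-t/\tau_j}\,,\qquad t\geq 0\,,
\]
whose Laplace transform is exactly $\sum_{i\in\mathcal{I}}\rho_i/(\sigma_i s+1)-\sum_{j\in\mathcal{G}}\alpha_{\mathrm{t},j}/(\tau_j s+1)$. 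Since Laplace transforms determine functions uniquely, \eqref{eq:thm1-condition} holds for all $s$ if and only if $f\equiv 0$ on $[0,\infty)$. Because $f$ is continuous and square integrable (all time constants are positive), this in turn is equivalent to $\int_0^\infty f(t)^2\,\mathrm{d}t=0$.

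Next, expand $\int_0^\infty f^2\,\mathrm{d}t$ using
\[
\int_0^\infty \frac{1}{ab}e^{-t/a}e^{-t/b}\,\mathrm{d}t=\frac{1}{ab}\cdot\frac{ab}{a+b}=\frac{1}{a+b}\,.
\]
The inverter--inverter cross terms give $\sum_{i,j\in\mathcal{I}}\rho_i\rho_j/(\sigma_i+\sigma_j)=\boldsymbol{\rho}^T\boldsymbol{\Sigma}(\boldsymbol{\sigma})\boldsymbol{\rho}$. The mixed terms give $-2\sum_{i\in\mathcal{I}}\sum_{j\in\mathcal{G}}\rho_i\alpha_{\mathrm{t},j}/(\sigma_i+\tau_j)=\boldsymbol{b}(\boldsymbol{\sigma})^T\boldsymbol{\rho}$, and the generator terms give $c$. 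The squared norm is therefore exactly the left-hand side of \eqref{eq:thm1-condition-quad}, which establishes the equivalence. As an alternative route, Parseval's theorem yields the same value as $\frac{1}{2\pi}\int|\cdot|^2\mathrm{d}\omega$, with each pairwise integral evaluated by residues, $\frac{1}{2\pi}\int_{-\infty}^{\infty}\frac{\mathrm{d}\omega}{(1+j\omega a)(1-j\omega b)}=\frac{1}{a+b}$.

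Positive semidefiniteness of $\boldsymbol{\Sigma}(\boldsymbol{\sigma})$ comes out of the same construction. Since $\Sigma_{ij}=\int_0^\infty e^{-\sigma_i t}e^{-\sigma_j t}\,\mathrm{d}t$, the matrix is a Gram matrix, which is the Cauchy-matrix argument. Thus $\boldsymbol{x}^T\boldsymbol{\Sigma}\boldsymbol{x}=\int_0^\infty\big(\sum_i x_ie^{-\sigma_i t}\big)^2\mathrm{d}t\geq 0$.

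I expect the main obstacle to be justifying the ``if and only if'' rigorously rather than the algebra. The direction from \eqref{eq:thm1-condition} to \eqref{eq:thm1-condition-quad} is immediate. For the converse, a vanishing integral of a nonnegative continuous function forces $f\equiv 0$, and uniqueness of the Laplace transform (or analytic continuation of rational functions from the half-plane $\mathrm{Re}\,s>-1/\max\{\sigma_i,\tau_j\}$) then recovers \eqref{eq:thm1-condition} for all $s$. One should note that repeated time constants cause no trouble, since coinciding exponentials simply merge. It is also worth remarking that \eqref{eq:thm1-condition-quad} is an equality on a nonnegative quantity, so it acts as a minimum-attained condition.
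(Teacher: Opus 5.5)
Your proposal is correct and follows essentially the same route as the paper: take the inverse Laplace transform of the constraint, square and integrate over $[0,\infty)$, evaluate each pairwise exponential integral as $1/(a+b)$, and obtain positive semidefiniteness of $\boldsymbol{\Sigma}(\boldsymbol{\sigma})$ by writing the quadratic form as the integral of a square. You also spell out the converse direction (a vanishing integral of a continuous nonnegative function forces $f\equiv 0$, and Laplace uniqueness then recovers the $s$-domain identity), which the paper's proof leaves implicit.
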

\begin{proof}
Taking the inverse Laplace transform of~\eqref{eq:thm1-condition} yields
\begin{align}\label{eq:thm1-condition-t}
0=\sum_{i \in \mathcal{I}} \frac{\rho_{i}}{\sigma_i}e^{-\frac{t}{\sigma_i}} - \sum_{i \in \mathcal{G}}\frac{\alpha_{\mathrm{t},i}}{\tau_i}e^{-\frac{t}{\tau_i}}\,.
\end{align}
Then, integrating the squared \eqref{eq:thm1-condition-t} over $t\in[0,\infty)$ gives
\begin{align}
    0=&\!\int_{0}^{\infty} \!\left(\sum_{i \in \mathcal{I}} \frac{\rho_{i}}{\sigma_i}e^{-\frac{t}{\sigma_i}} - \sum_{i \in \mathcal{G}}\frac{\alpha_{\mathrm{t},i}}{\tau_i}e^{-\frac{t}{\tau_i}}\right)^2\mathrm{d}t\label{eq:square-constraint}\\
    =&\!\int_{0}^{\infty} \!\Bigg[\!\!\left(\sum_{i \in \mathcal{I}} \!\frac{\rho_{i}}{\sigma_i}e^{-\frac{t}{\sigma_i}}\! \right)^2\!\!-\!2\!\left(\sum_{i \in \mathcal{I}} \!\frac{\rho_{i}}{\sigma_i}e^{-\frac{t}{\sigma_i}} \right)\!\!\left(\sum_{i \in \mathcal{G}}\!\frac{\alpha_{\mathrm{t},i}}{\tau_i}e^{-\frac{t}{\tau_i}}\right)\nonumber\\&\qquad+ \left(\sum_{i \in \mathcal{G}}\frac{\alpha_{\mathrm{t},i}}{\tau_i}e^{-\frac{t}{\tau_i}}\!\right)^2\Bigg]\mathrm{d}t
    \nonumber\\
    =&\!\int_{0}^{\infty} \!\Bigg[\sum_{i \in \mathcal{I}}\sum_{j \in \mathcal{I}}\left( \!\frac{\rho_{i}}{\sigma_i}e^{-\frac{t}{\sigma_i}}\! \right)\!\!\left( \!\frac{\rho_{j}}{\sigma_j}e^{-\frac{t}{\sigma_j}}\! \right)\!\!\nonumber\\&\qquad-2\sum_{i \in \mathcal{I}}\sum_{j \in \mathcal{G}}\left( \!\frac{\rho_{i}}{\sigma_i}e^{-\frac{t}{\sigma_i}} \right)\!\!\left(\!\frac{\alpha_{\mathrm{t},j}}{\tau_j}e^{-\frac{t}{\tau_j}}\right)\nonumber\\&\qquad+ \sum_{i \in \mathcal{G}}\sum_{j \in \mathcal{G}}\left(\frac{\alpha_{\mathrm{t},i}}{\tau_i}e^{-\frac{t}{\tau_i}}\!\right)\!\!\left(\frac{\alpha_{\mathrm{t},j}}{\tau_j}e^{-\frac{t}{\tau_j}}\!\right)\Bigg]\mathrm{d}t
    \nonumber\\
    =&\sum_{i \in \mathcal{I}}\sum_{j \in \mathcal{I}}\frac{\rho_{i} \rho_{j}}{\sigma_i \sigma_j}\int_{0}^{\infty}\!\!e^{-\left(\frac{1}{\sigma_i}+\frac{1}{\sigma_j}\right)t} \ \mathrm{d}t\!\!\nonumber\\&-2\sum_{i \in \mathcal{I}}\sum_{j \in \mathcal{G}}\frac{\rho_{i} \alpha_{\mathrm{t},j}}{\sigma_i \tau_j}\int_{0}^{\infty}\!\!e^{-\left(\frac{1}{\sigma_i}+\frac{1}{\tau_j}\right)t} \ \mathrm{d}t\nonumber\\&+ \sum_{i \in \mathcal{G}}\sum_{j \in \mathcal{G}}\frac{\alpha_{\mathrm{t},i} \alpha_{\mathrm{t},j}}{\tau_i \tau_j}\int_{0}^{\infty}\!\!e^{-\left(\frac{1}{\tau_i}+\frac{1}{\tau_j}\right)t} \ \mathrm{d}t
    \nonumber\\
    =&\sum_{i \in \mathcal{I}}\sum_{j \in \mathcal{I}}\frac{\rho_{i} \rho_{j}}{\sigma_i +\sigma_j}\!-2\sum_{i \in \mathcal{I}}\sum_{j \in \mathcal{G}}\frac{\rho_{i} \alpha_{\mathrm{t},j}}{\sigma_i +\tau_j}+ \!\sum_{i \in \mathcal{G}}\sum_{j \in \mathcal{G}}\frac{\alpha_{\mathrm{t},i} \alpha_{\mathrm{t},j}}{\tau_i +\tau_j}\,,\nonumber
\end{align}
whose compact form is exactly~\eqref{eq:thm1-condition-quad}. 

The positive semi-definite property of $\boldsymbol{\Sigma}(\boldsymbol{\sigma})$ follows from the following two facts. First, $\boldsymbol{\Sigma}(\boldsymbol{\sigma})$ is symmetric since $\Sigma_{ji}=\Sigma_{ji}$ holds trivially, $\forall i,j \in \mathcal{I}$, by its construction \eqref{eq:Sig_ij}. Second, observe from the first summation term along the derivations in \eqref{eq:square-constraint} that, $\forall \boldsymbol{\rho} \in \real^{|\mathcal{I}|}$,  
\begin{align*}
    \boldsymbol{\rho}^{T} \boldsymbol{\Sigma}(\boldsymbol{\sigma}) \boldsymbol{\rho}\!=\!\sum_{i \in \mathcal{I}}\sum_{j \in \mathcal{I}}\frac{\rho_{i} \rho_{j}}{\sigma_i +\sigma_j}=\!\!\int_{0}^{\infty} \!\left(\sum_{i \in \mathcal{I}} \!\frac{\rho_{i}}{\sigma_i}e^{-\frac{t}{\sigma_i}}\! \right)^2\mathrm{d}t\geq 0\,.
\end{align*}
This concludes the proof.
\end{proof}
\subsection{Reformulated Optimal Allocation Problem}
After combining the previous results including Theorem~\ref{thm:cost}, Remark~\ref{rem:co-cost}, and Theorem~\ref{thm:s-const}, we can reformulate the optimal allocation of grid-forming frequency shaping control as the following constrained optimization problem:
\begin{subequations}\label{eq:opt-ss}
	\begin{align}
		\label{eq:edp2.a}
		\min_{m_{i}>0, d_{i}>\rho_{i}\geq0, \sigma_i>0, \forall i \in \mathcal{I}, \boldsymbol{X}=\boldsymbol{X}^T} \ \      \tr{\boldsymbol{B}^T\boldsymbol{X}\boldsymbol{B}}    \\
		\label{eq:edp2.b}
		\mathrm{s.t.} \qquad\quad\ \   
		\eqref{eq:I_parameters}, \eqref{eq:X-lyapunov-con}, \text{and}\ \eqref{eq:thm1-condition-quad}\,,
	\end{align}	\label{eq:edp2}%
\end{subequations}
whose objective and constraints both have non-convexity.

\section{Solution Method and Case Study}
In this section, we address the non-convex optimization problem \eqref{eq:opt-ss} using the existing successive convexification algorithm~\cite{Oguri2023cdc} and present the effectiveness of the optimal allocation of grid-forming frequency shaping control through simulations on the Icelandic Power Network~\cite{iceland}.

\subsection{Solution Method: Successive Convexification}
To solve \eqref{eq:opt-ss} using the successive convexification algorithm proposed by~\cite[Algorithm~1]{Oguri2023cdc}, we need to turn \eqref{eq:opt-ss} into the standard convexified optimization problem therein~\cite[Problem 4]{Oguri2023cdc}. We begin by rewriting \eqref{eq:opt-ss} as the following non-convex penalty problem that penalizes the non-convex objective function~\eqref{eq:edp2.a} as well as violations of non-affine constraints~\eqref{eq:X-lyapunov-con} and \eqref{eq:thm1-condition-quad}~\cite[Problem 3]{Oguri2023cdc}:
\begin{subequations}\label{eq:opt-J_pro3}
	\begin{eqnarray}
		\label{eq:J_pro3.a}
		\!\!\!\!\!\!\!\!\!\!\min_{\boldsymbol{z}} &&       \!\!\!\!\!\!J(\boldsymbol {z})
:=
l^2 + \boldsymbol\lambda^T\boldsymbol g(\boldsymbol{z}) + \dfrac{w}{2}
\|\boldsymbol {g}(\boldsymbol z)\|_2^2    \\
		\label{eq:J_pro.b}
		\!\!\!\!\!\!\!\!\!\!\mathrm{s.t.} &&  
		\!\!\!\!\!\!\eqref{eq:I_parameters},\text{and}\ m_{i}>0, d_{i}>\rho_{i}\geq0, \sigma_i>0, \forall i \in \mathcal{I} \,,
	\end{eqnarray}%
\end{subequations}
where $\boldsymbol{z}
\!:=\![l\ \left(m_i, i \in \mathcal{I} \right)^T\ \left(d_i, i \in \mathcal{I} \right)^T\ \boldsymbol{\rho}^T\ \boldsymbol{\sigma}^T]^T\!\in \!\real^{4|\mathcal{I}|+1}$, $\boldsymbol\lambda$ is the Lagrange multiplier vector, $w$ is the penalty factor, and $\boldsymbol{g}(\boldsymbol{z})$ denotes the vectorized non-affine constraint functions given by\footnote{Here, entries of $\boldsymbol{X}$ are not considered as optimization variables any more, since $\boldsymbol{X}$ is uniquely determined by any given $\boldsymbol{z}$ by Theorem~\ref{thm:cost}. Thus, we use $\boldsymbol{X}(\boldsymbol{z})$ to emphasize that $\boldsymbol{X}$ is an implicit function of $\boldsymbol{z}$ that can be directly solved by using the routine $\operatorname{Lyap}(\boldsymbol{A},\boldsymbol{C}^T\boldsymbol{C})$ together with $\boldsymbol{X}\boldsymbol{v}=\mathbbold{0}_{3n}$~\cite[Section~III-D]{Poolla2017TAC}.}
\begin{equation*}
\boldsymbol{g}(\boldsymbol{z})
:=
\begin{bmatrix}
\tr{\boldsymbol{B}^T\boldsymbol{X}(\boldsymbol{z})\boldsymbol{B}}-l^2
\\
\operatorname{vec}\!\left(
\boldsymbol{A}^T \boldsymbol{X}(\boldsymbol{z}) + \boldsymbol{X}(\boldsymbol{z})\boldsymbol{A} +\boldsymbol{C}^T\boldsymbol{C}
\right)
\\
\boldsymbol{X}(\boldsymbol{z})\boldsymbol{v}
\\
\boldsymbol{\rho}^{T} \boldsymbol{\Sigma}(\boldsymbol{\sigma}) \boldsymbol{\rho}
+
\boldsymbol{b}(\boldsymbol{\sigma})^{T}\boldsymbol{\rho}
+
c 
\end{bmatrix}
\label{eq:g_definition}
\end{equation*}
with $\operatorname{vec}(\cdot)$ stacking all entries of its argument matrix. The non-convex penalty problem~\eqref{eq:opt-J_pro3} can be linearized around the current solution $\bar{\boldsymbol{z}}$ at each iteration, which yields the following convexified penalty subproblem~\cite[Problem 4]{Oguri2023cdc}\begin{subequations}\label{eq:opt-J_pro4}
	\begin{eqnarray}
		\label{eq:J_pro4.a}
		\!\!\!\!\!\!\!\!\!\!\min_{\boldsymbol{z},\boldsymbol\xi} &&       \!\!\!\!\!\!L(\boldsymbol {z},\boldsymbol\xi)
:=
l^2 + \boldsymbol\lambda^T\boldsymbol\xi + \dfrac{w}{2}
\|\boldsymbol\xi\|_2^2    \\
		\label{eq:J_pro4.b}
		\!\!\!\!\!\!\!\!\!\!\mathrm{s.t.} &&
\!\!\!\!\!\!\tilde{\boldsymbol{g}}(\boldsymbol{z}):=\boldsymbol{g}(\bar{\boldsymbol{z}})+\nabla_{\boldsymbol {z}}\boldsymbol {g}(\bar{\boldsymbol{z}})^T
(\boldsymbol {z}-\bar{\boldsymbol{z}})
=
\boldsymbol\xi\\&&
\!\!\!\!\!\!\|
\boldsymbol{z}-\bar{\boldsymbol{z}}\|_{\infty}
\le
r
\\  
		&&\!\!\!\!\!\!\eqref{eq:I_parameters},\text{and}\ m_{i}>0, d_{i}>\rho_{i}\geq0, \sigma_i>0, \forall i \in \mathcal{I} \,,
	\end{eqnarray}	%
\end{subequations}
where $\boldsymbol\xi$ is a dual variable indicating constraint violations and $r>0$ is a trust region bound for linearization. With \eqref{eq:opt-J_pro3} and \eqref{eq:opt-J_pro4} established, we are ready to utilize the successive convexification algorithm~\cite[Algorithm~1]{Oguri2023cdc} to solve the optimal allocation of grid-forming frequency shaping control problem~\eqref{eq:opt-ss}.

\subsection{Case Study: Modified Icelandic Power Network}
The simulations are conducted on the modified Icelandic Power Network~\cite{iceland}, where the network setup is similar to the one in~\cite[Section V]{jiang2021lcss} except that $9$ buses are randomly picked as inverters instead of $6$. Suppose that the maximum value of a sudden power imbalance that the
system should survive is $\Delta P=\SI{0.6}{\pu}$. The optimal frequency shaping control resources allocation to obtain desired magnitude of RoCoF $|\dot{\omega}|_{\mathrm{d}\infty}=\SI{0.00625}{\per\second}$  and steady-state frequency deviation $\Delta\omega_{ \mathrm{d}}=\SI{0.00625}{\pu}$ (on \SI{60}{\hertz} base) is provided in Fig.~\ref{fig:compareS}.

\begin{figure}[t!]
\centering
\subfigure[Optimal parameter allocation]
{\includegraphics[width=0.89\columnwidth]{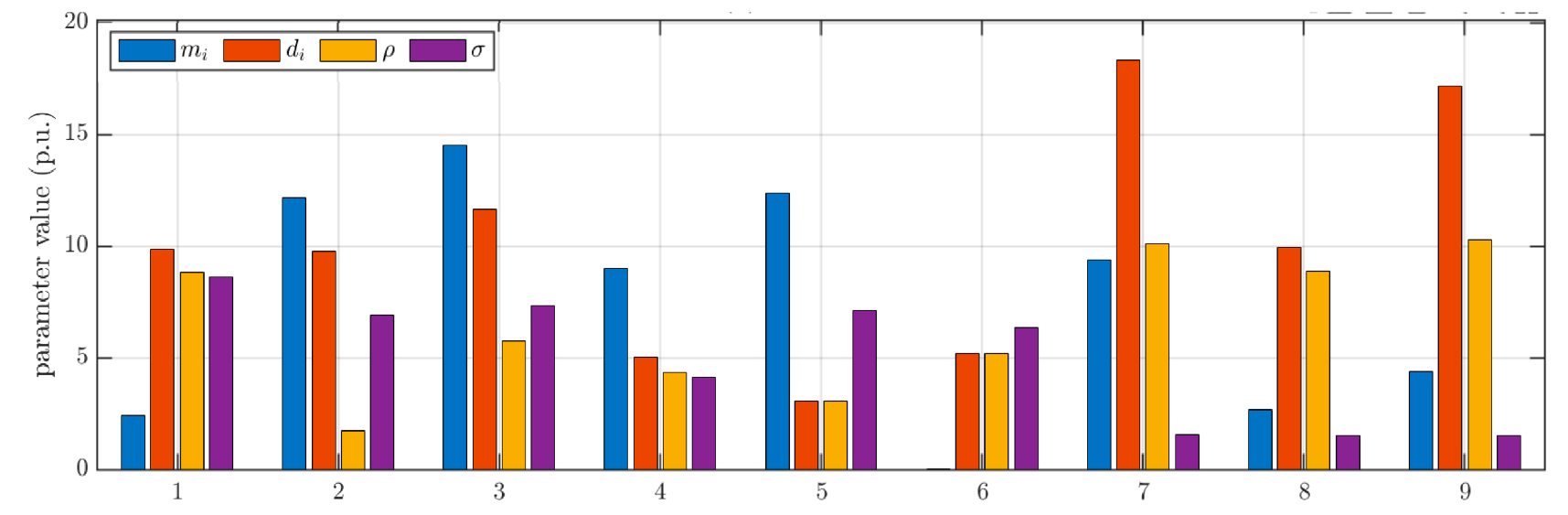}\label{fig:sw-simu}}
\hfil
\subfigure[System with optimal allocation of grid-forming shaping control]
{\includegraphics[width=0.89\columnwidth]{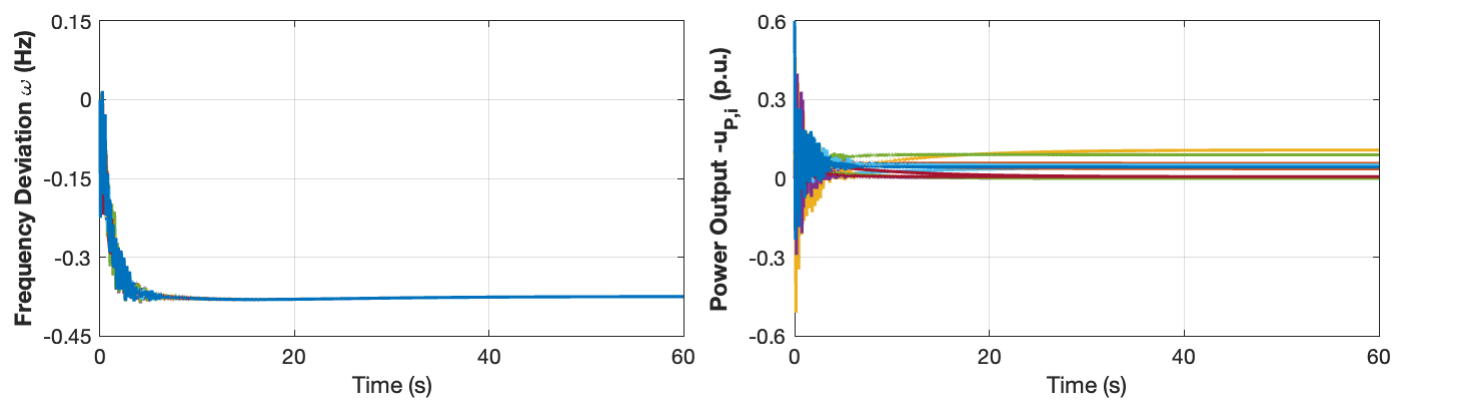}\label{fig:GF_FS_method2}}
\caption{System performance when a $-0.6$ p.u. step change in power injection is introduced to a randomly picked bus.}
\label{fig:compareS}
\end{figure}

\section{Conclusions}
The allocation of grid-forming frequency shaping control resources to minimize the transient control effort required for achieving specified aggregate frequency dynamics has been formulated as a constrained $\mathcal{H}_2$ norm optimization problem, where attention has been paid to the transient cost quantification and $s$-domain constraint simplification. The resulting non-convex optimization problem has been solved using the existing successive convexification algorithm. The effectiveness
of the allocation has been verified by numerical simulations.

\bibliographystyle{IEEEtran}
\bibliography{main}

\end{document}